\newenvironment{proof}[1][Proof]{\textbf{#1.} }{\ \rule{0.5em}{0.5em}}
\newtheorem{theorem}{Theorem}
\newtheorem{remark}{Remark}[section]
\newtheorem{assumption}{Assumption}
\begin{document}
\begin{frontmatter}
\title{ Finite-Time Control Based on Differential Flatness for Wheeled Mobile Robots with Experimental Validation}
\author[ImViA,COMSATS]{Imtiaz Ur Rehman}
\author[LIS]{Moussa Labbadi}
\author[ImViA]{Amine Abadi}
\author[ImViA]{Lew Lew Yan Voon}

\thanks[footnoteinfo]{We thank the French government for the Plan France Relance initiative which provided fundings via the European Union under contract ANR-21-PRRD-0047-01. We are also grateful to the Doctoral School and the French Ministry of Research for the PhD MENRT scholarship.}

\address[ImViA]{ImViA Laboratory EA 7535, Université Bourgogne Europe,  
720 avenue de l'Europe, 71200 Le Creusot, France  
(e-mail: Imtiaz-Ur.Rehman@ube.fr; Amine.Abadi@ube.fr; lew.lew-yan-voon@ube.fr)}

\address[COMSATS]{COMSATS University Islamabad, Islamabad, Pakistan  
(e-mail: imtiaz.rehman@comsats.edu.pk)}

\address[LIS]{Aix-Marseille University, LIS UMR CNRS 7020, Marseille, France  
(e-mail: moussa.labbadi@lis-lab.fr)}

\begin{abstract}   
A robust tracking control strategy is designed to empower wheeled mobile robots (WMRs) to track predetermined routes while operating in diverse fields and encountering disturbances like strong winds or uneven path conditions, which affect tracking performance. Ensuring the applicability of this tracking method in real-world scenarios is essential. To accomplish this, the WMR model is initially transformed into a linear canonical form by leveraging the differential flatness of its kinematic model, facilitating controller design. Subsequently, a novel integral nonlinear hyperplane-based sliding mode control (INH-SMC) technique is proposed for WMR under disturbances. The stability of the technique is analyzed and verified. Finally, its practical viability is demonstrated through a comparative real-world indoor experiment on a TurtleBot3 WMR subjected to disturbances, confirming the feasibility and efficacy of the proposed approach.
\end{abstract}

\begin{keyword}
Wheeled mobile robots \sep differential flatness \sep finite-time control \sep sliding mode control \sep robustness \sep experimental validation
\end{keyword}

\end{frontmatter}

\section{INTRODUCTION}

Wheeled mobile robots (WMRs) are rapidly expanding across various domains. Technological advancements have enabled these robots to move beyond laboratories into numerous industries, including military, mining, transportation, manufacturing, exploration, agriculture, and healthcare~\citep{ebel2024cooperative,yepez2023mobile,garaffa2021reinforcement}. WMRs perform diverse tasks such as autonomous navigation, surveillance, material handling, and inspection, even in domestic settings like automated cleaning, door control, and food serving~\citep{zhao2021wheeled,al2021embedded,hassan2024path}. Since WMRs are used in various fields, the primary objective of most robot applications is trajectory tracking (TT), which requires robots to follow predetermined paths or trajectories. However, when WMRs operate outdoors, they confront various disturbances that affect accurate TT, including model uncertainties, uneven terrain, and external disturbances such as wind.

As a result, TT has become an important area of study in robotics. Numerous studies have explored this topic using diverse approaches~\citep{zangina2020non,yousuf2021dynamic,xu2020enhanced,tang2010differential}. Nevertheless, these solutions fail to account for disturbances. Thus, to ensure robustness in TT, scholars have put forth several techniques. Neural network (NN)-based methods have tackled the dual objectives of TT and robustness in~\citep{korayem2024adaptive,chen2020adaptive,li2018adaptive}. Furthermore, fuzzy algorithms are illustrated in~\citep{tolossa2024trajectory, moudoud2022fuzzy, chen2023nonlinear, chwa2011fuzzy, singhal2022robust}. In addition, because differential flatness (DF) excels at simplifying the control design process, it has become more prevalent among diverse control strategies. The flatness condition enables the system's inputs and states to be expressed in terms of independent outputs, known as flat outputs, and their derivatives~\citep{fliess1995flatness}. In the works of Yuan et al.~\citep{yuan2024differential} and Liu et al.~\citep{liu2023autonomous}, robust tracking control (RTC) algorithms were developed for WMRs employing the DF framework to meet the dual goals. 

 One of the several control algorithms used for the TT of a WMR under disturbances is sliding mode control (SMC). SMC-based TT control algorithms are proposed to fix the robustness problem and are detailed in~\citep{mera2020sliding, yang2018sliding, nath2021event, mu2017nonlinear}. likewise, as described in~\citep{dang2023adaptive,wu2019backstepping}, SMC techniques based on NN and fuzzy logic are also proposed to handle the TT problem together with robustness. Fuzzy and NN-based control algorithms have become more common in the reviewed study due to their ability to achieve TT and their exceptional versatility in resolving disturbance issues. However, they do have certain disadvantages, including demanding processing requirements and complicated structural designs. Moreover, the dual objectives are likewise addressed by SMC algorithms, but they do not exhibit finite-time convergence (FTC). Therefore, terminal SMC (TSMC) has been studied in~\citep{benaziza2017mobile}. Despite achieving FTC, terminal SMC suffers from slow convergence and a singularity issue. To resolve this, integral TSMC (ITSMC) is presented in~\citep{sun2021trajectory}. Additionally, fast nonsingular TSMC (FNTSMC), an enhanced variant of TSMC, has been introduced in~\citep{xie2021robust}. Thus, this work proposes a novel hyperplane SMC technique using DF for the WMR exposed to external disturbances by combining the notable benefits of integral TSMC and FNTSMC. The following highlights our core contributions in this paper: i) By leveraging the DF property, the nonlinear kinematic model of the WMR is transformed into a linear structure, effectively addressing underactuation and converting nonmatching disturbances into matching ones; ii) A novel RTC law is proposed for the TT problem of a WMR that ensures precise tracking even when exposed to disturbances. In this approach, an integral nonlinear hyperplane-based sliding manifold is introduced, designed with an NTSM surface to provide robustness, precise tracking, and fast convergence; iii) A switching control law, synthesized based on the proposed sliding manifold, is employed to counteract the upper bound of strong disturbances; iv) The practicality and effectiveness of the proposed controller are demonstrated through experimental validation on a TurtleBot3 WMR under disturbances, with a comparison to existing methods in terms of robustness, TT performance and control effort. A video of the experiment is provided via the link in the Results section.

 The following is an outline of the article:
Section II sets forth the mathematical model for the WMR. Section III illustrates the proposed RTC algorithm. The related experimental results and conclusions are summarized in Sections IV and V, respectively.

\section{Mathematical Model}
A two-wheeled differential-drive mobile robot is the subject of this study. The generalized WMR setup is given by $\Omega = [x\; y\; \theta]^\top$, where $(x,y)$ represents the Cartesian coordinates of the center, and $\theta$ indicates the orientation angle. The WMR has two wheels, each with a radius $r$, separated by a distance $D$. The angular velocities of the left and right wheels are represented by $u_l$ and $u_r$, respectively. The following expressions define the rotational velocity $w$ and translational velocity $v$ of the WMR: 
\begin{align}\label{eq3}
v&=\frac{r}{2}(u_l+u_r), \nonumber \\
w&=\frac{r}{D}(u_r-u_l).
\end{align}
The kinematic model (KM) of a WMR is formulated as follows:
\begin{align} \label{eqrobot}
\dot \Omega&=[v\cos\theta\; v\sin\theta\; w]^\top.
\end{align}
The WMR presents a control challenge since it has three states and two inputs. Reducing the system’s dimensionality is necessary to facilitate control design for a system with DF. To show the DF of the WMR, the flat outputs (FOs) are defined as follows:
\begin{align}\label{eq:FO1}
   \Gamma&=[\Gamma_{11}\;  \Gamma_{21}]^\top = [x\;  y]^\top. 
\end{align}
Given (\ref{eq:FO1}), its time derivative yields:
\begin{align}\label{eq:FO2}
   \dot\Gamma&=\begin{bmatrix}
\dot{\Gamma}_{11} \\ \dot{\Gamma}_{21} \\
\end{bmatrix}
=
\begin{bmatrix} 
	\dot x \\ \dot y
\end{bmatrix} =
\begin{bmatrix} 
	\cos \theta \\ \sin \theta
\end{bmatrix}v.
\end{align}
Computing the time derivative of (\ref{eq:FO2}) results in:

\begin{align}\label{eq:FO3}
   \ddot\Gamma&=\begin{bmatrix}
\ddot{\Gamma}_{11} \\ \ddot{\Gamma}_{21} \\
\end{bmatrix}
=
\begin{bmatrix} 
	\ddot x \\ \ddot y
\end{bmatrix} =\underbrace{
\begin{bmatrix}
       \cos\theta & -v \sin\theta  \\ 
       \sin\theta & v \cos\theta
\end{bmatrix}}_{N}\begin{bmatrix}u_{n1} \\ u_{n2}\end{bmatrix},
\end{align}
where $u_{n1}$ and $u_{n2}$ depict the new inputs, with $u_{n1} = \dot{v}$ and $u_{n2} = \dot{\theta}$. The following outlines how the system's states and inputs can be characterized in terms of the FOs and their derivatives:
\begin{align}\label{eq:FO4}
   x&=\Gamma_{11}, \quad y=\Gamma_{21}, \quad \theta=\tan^{-1} \left(\frac{\dot \Gamma_{21}}{\dot \Gamma_{11}}\right).
\end{align}
\begin{align}\label{eq:FO5}
   v&=\sqrt{\dot \Gamma_{11}^{2}+\dot \Gamma_{21}^{2}}, \quad w=\frac{\dot \Gamma_{11} \ddot \Gamma_{21}-\ddot \Gamma_{11} \dot \Gamma_{21}}{\dot \Gamma_{11}^2+\dot \Gamma_{21}^2}.
\end{align}
Using the given FOs, the new inputs $u_{n1}$ and $u_{n2}$ can be outlined as below:
\begin{align}\label{eq:FO6}
   u_{n1}&=\dot v=\frac{\dot \Gamma_{11} \ddot \Gamma_{11}+\ddot \Gamma_{21} \dot \Gamma_{21}}{\sqrt{\dot \Gamma_{11}^{2}+\dot \Gamma_{21}^{2}}}, \nonumber\\
   u_{n2}&=w=\frac{\dot \Gamma_{11} \ddot \Gamma_{21}-\dot \Gamma_{21} \ddot \Gamma_{11} }{\dot \Gamma_{11}^2+\dot \Gamma_{21}^2}.
\end{align}
Contrary, the system's states and inputs can potentially fully represent the FOs:
\begin{align}\label{eq:FO7}
   \Gamma_{11}&=x, \quad \Gamma_{21}=y, \nonumber\\
   \dot{\Gamma}_{11}&= v \cos\theta, \quad \dot{\Gamma}_{21}= v \sin\theta, \nonumber\\
   \ddot{\Gamma}_{11}&= u_{n1}\cos\theta -u_{n2}v \sin\theta, \nonumber\\ 
    \ddot{\Gamma}_{21}&= u_{n1}\sin\theta +u_{n2}v \cos\theta.
\end{align}
To summarize, the WMR system is DF. Full state controllability within the FO space is achievable when the number of FOs ($\Gamma$) equals the number of inputs~\citep{fliess1995flatness}.

The matrix $N$ in (\ref{eq:FO3}) is non-singular if $v \neq 0$. In this context, control can be defined as follows:
\begin{align}\label{eq:FO8}
 \begin{bmatrix}
u_{n1} \\ u_{n2} 
\end{bmatrix}
&=
N^{-1}\begin{bmatrix}
\ddot{\Gamma}_{11} \\ \ddot{\Gamma}_{21} \\
\end{bmatrix}.   
\end{align}
Substituting (\ref{eq:FO8}) into (\ref{eq:FO3}) results in a linearized system in Brunovsky Form (BF), with the following representation:
\begin{gather*}
\dot \Gamma_{11} =\Gamma_{12}, \quad 
\dot \Gamma_{12} =v_x, \quad  
\dot  \Gamma_{21}= \Gamma_{22}, \quad 
\dot  \Gamma_{22}=v_y.
\end{gather*}

Subject to ideal conditions, the suitable virtual feedback control laws $v_x$ and $v_y$ can adequately follow the desired trajectories $\Gamma_{xd}$ and $\Gamma_{yd}$ for the FOs $\Gamma_{11}$ and $\Gamma_{21}$, respectively. However, external perturbations, like uneven terrain or wind, are not accounted for in the KM of the WMR. Moreover, nominal control laws alone are insufficient to yield good results in practical situations due to the presence of disturbances. To address these effects, Section \ref{sec:cd} proposes a RTC law for the WMR.
\subsection{WMR Model With Disturbances}\label{sec: mm_2}
The WMR model, which is susceptible to external disturbances, is formulated as follows:
\begin{align}\label{UKM}
   \dot{\Omega}&=
\begin{bmatrix}
       \cos\theta & 0  \\ 
       \sin\theta & 0  \\ 
       0 & 1 
\end{bmatrix}\begin{bmatrix}v \\ w\end{bmatrix} + \bar{d},
\end{align}
where the external perturbations impacting the WMR are illustrated by $\bar{d}=[d_x\; d_y\; d_\theta]^\top$. A differentially flat depiction that accounts for these disturbances can be derived using the model in (\ref{UKM}) as outlined:
\begin{small}
\begin{align}\label{UKM_1} 
\begin{bmatrix}
     \ddot{\Gamma}_{11} \\ \ddot{\Gamma}_{21}
\end{bmatrix}
&=\begin{bmatrix}
       \cos\theta & -v \sin\theta  \\ 
       \sin\theta & v \cos\theta
\end{bmatrix}
\begin{bmatrix} 
	 u_{n1} \\ u_{n2} 
\end{bmatrix} + \underbrace{\begin{bmatrix}
\dot d_x -v d_\theta \sin\theta\\ 
      \dot d_y +v d_\theta \cos\theta   
\end{bmatrix}}_{\bar{D}}.
\end{align}
\end{small}
Inserting (\ref{eq:FO8}) into (\ref{UKM_1}) gives the following result:
\begin{equation}\label{eqlambda}
\ddot \Gamma=v+\varpi,
\end{equation}
whereas $\ddot \Gamma$=$[\ddot \Gamma_{11},\ddot \Gamma_{21}]^\top$, $v$=$[v_x\; v_y]^\top$ and $\varpi$=$[\varpi_x\; \varpi_y]^\top$=$\bar{D}$.
Note that disturbances are denoted by $\varpi$. Reformulating (\ref{eqlambda}) in the context of two-linear integrator systems exposed to disturbances leads to the following formulations:
\begin{gather*}
\dot \Gamma_{11} = \Gamma_{12},\; 
\dot \Gamma_{12} = v_x + \varpi_x,\; 
\dot \Gamma_{21} = \Gamma_{22},\; 
\dot \Gamma_{22} = v_y + \varpi_y.
\end{gather*}
\begin{assumption}\label{ass:1}
The disturbances applied to the WMR are unknown but bounded. 
\end{assumption}
\begin{remark}
Assumption \ref{ass:1} holds in practice; extrinsic perturbations, such as rough or uneven surfaces and wind, can have a detrimental impact on the TT performance of WMRs. Nevertheless, these factors have a limited effect on position variation.
\end{remark}

\section{Robust Control Design}\label{sec:cd}
To fulfill the previously stated control objectives for the WMR in the presence of disturbances, a robust tracking control (RTC) law is developed. The approach introduces sliding mode surfaces and constructs novel hyperplane-based sliding manifolds to enhance robustness and convergence. Subsequently, the stability of the closed-loop system is analyzed.

\subsection{Hyperplane-Based Sliding Manifold}
A hyperplane-based sliding variable is constructed by integrating Integral Terminal Sliding Mode Control (ITSMC)~\citep{sun2021trajectory} and Nonsingular Terminal Sliding Mode Control (NTSMC)~\citep{xie2021robust} techniques to achieve precise, fast, and robust tracking performance.

Define the tracking errors as
\[
e_x = \Gamma_{11} - \Gamma_{xd}, \qquad e_y = \Gamma_{21} - \Gamma_{yd},
\]
and their derivatives as
\[
\dot{e}_x = \dot{\Gamma}_{11} - \dot{\Gamma}_{xd}, \qquad \dot{e}_y = \dot{\Gamma}_{21} - \dot{\Gamma}_{yd},
\]
where $\Gamma_{xd}$ and $\Gamma_{yd}$ denote the reference trajectories for $\Gamma_{11}$ and $\Gamma_{21}$, respectively.

The ITSM for the positions along the $x$ and $y$ axes is defined as
\begin{equation}\label{eq:n4a}
s_x = \kappa_{x1} e_x + \kappa_{x2} \int |e_x|^{\Phi_x} \operatorname{sign}(e_x)\, dt,
\end{equation}
\begin{equation}\label{eq:n4b}
s_y = \kappa_{y1} e_y + \kappa_{y2} \int |e_y|^{\Phi_y} \operatorname{sign}(e_y)\, dt,
\end{equation}
where $\kappa_{i1},\kappa_{i2} \in \mathbb{R}^+$ and $\Phi_i\in(0.5,1)$.

Following your modification, the hyperplane-based sliding manifolds are chosen using the derivative-based term:
\begin{equation}\label{eq:n4c_dot}
\sigma_x = s_x + \mu_x |\dot{s}_x|^{\beta_x}\operatorname{sign}(\dot{s}_x),
\end{equation}
\begin{equation}\label{eq:n4d_dot}
\sigma_y = s_y + \mu_y |\dot{s}_y|^{\beta_y}\operatorname{sign}(\dot{s}_y),
\end{equation}
where $\mu_i>0$ and $\beta_i>1$ are design parameters.

\subsection{Control Law Design}
Differentiate (\ref{eq:n4c_dot})--(\ref{eq:n4d_dot}). For $i\in\{x,y\}$ we get
\begin{equation}\label{eq:dot_sigma}
\dot{\sigma}_i = \dot{s}_i + \mu_i \beta_i |\dot{s}_i|^{\beta_i-1} \ddot{s}_i,
\end{equation}
valid for $\dot{s}_i\neq 0$ (and interpreted in the Filippov sense at $\dot{s}_i=0$).

The controller is composed of an equivalent term and a switching term:
\begin{equation}\label{eq:vi_split}
v_i = v_{eqi} + v_{swi},
\end{equation}
where, following the same design philosophy as before, one may take for instance
\begin{equation}\label{eq:veqi}
v_{eqi} = \frac{1}{\kappa_{i1}}\Big(\dot{\Gamma}_{id} - \kappa_{i2}|e_i|^{\Phi_i}\operatorname{sign}(e_i)\Big),
\end{equation}
and
\begin{equation}\label{eq:vswi}
v_{swi} = -\frac{1}{\kappa_{i1}}\Big(\Upsilon_{i1}\sigma_i + \Upsilon_{i2}\operatorname{sign}(\sigma_i)\Big),
\end{equation}
with $\Upsilon_{i1},\Upsilon_{i2}>0$.

Using the system relations (i.e., $\ddot{\Gamma}_{11},\ddot{\Gamma}_{21}$ expressed via $v_x,v_y$ and then mapped to actuator commands) the closed-loop substitution yields an expression of $\ddot{s}_i$ containing $v_i$, bounded model terms and perturbations which we denote compactly by $\varpi_i$.

\subsection{Finite-time reaching and sliding theorems}

\begin{theorem}\label{th:reach_sigma}
Consider the closed-loop system under the control law \eqref{eq:vi_split}--\eqref{eq:vswi} and the hyperplane-based manifolds
\begin{equation*}
\sigma_i = s_i + \mu_i |\dot{s}_i|^{\beta_i}\operatorname{sign}(\dot{s}_i),\qquad \mu_i>0,\ \beta_i>1.
\end{equation*}
Assume the perturbations are bounded and satisfy $|\kappa_{i1}\varpi_i|\le\Upsilon_{i2}$. Then for any initial condition with $\dot{s}_i\neq0$ almost everywhere there exists a finite time $t_{f_i}>0$ such that
\[
\sigma_i(t)=0,\qquad \forall t\ge t_{f_i}.
\]
That is, each sliding variable $\sigma_i$ is reached in finite time.
\end{theorem}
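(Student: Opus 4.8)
The plan is to prove finite-time reaching channel by channel, treating $i\in\{x,y\}$ separately since the two axes are decoupled in the Brunovsky form, so the finite-time bound obtained for one channel transfers verbatim to the other. For a fixed $i$ I would take the quadratic candidate $V_i=\tfrac12\sigma_i^2$ and differentiate along the closed-loop trajectories to get $\dot V_i=\sigma_i\dot\sigma_i$. Inserting the expression \eqref{eq:dot_sigma} gives the working identity $\dot V_i=\sigma_i\dot s_i+\mu_i\beta_i|\dot s_i|^{\beta_i-1}\,\sigma_i\ddot s_i$, which already exposes the two structurally different contributions I must control: a bare drift $\sigma_i\dot s_i$ and a switching-driven term weighted by $\mu_i\beta_i|\dot s_i|^{\beta_i-1}$.

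The second step is to make the control visible inside $\ddot s_i$. Differentiating \eqref{eq:n4a}--\eqref{eq:n4b} twice yields $\ddot s_i=\kappa_{i1}\ddot e_i+\kappa_{i2}\Phi_i|e_i|^{\Phi_i-1}\dot e_i$, and substituting the perturbed double-integrator dynamics $\ddot\Gamma_i=v_i+\varpi_i$ together with \eqref{eq:vi_split}--\eqref{eq:vswi} shows that the equivalent part $v_{eqi}$ is designed to cancel the nominal reference and state-dependent terms, leaving $\kappa_{i1}v_{swi}=-(\Upsilon_{i1}\sigma_i+\Upsilon_{i2}\operatorname{sign}(\sigma_i))$ plus the lumped perturbation $\kappa_{i1}\varpi_i$. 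Hence $\sigma_i\ddot s_i\le-\Upsilon_{i1}\sigma_i^2-(\Upsilon_{i2}-|\kappa_{i1}\varpi_i|)|\sigma_i|$, and it is exactly here that the hypothesis $|\kappa_{i1}\varpi_i|\le\Upsilon_{i2}$ renders the bracket nonnegative, so that $\mu_i\beta_i|\dot s_i|^{\beta_i-1}\sigma_i\ddot s_i\le-\mu_i\beta_i|\dot s_i|^{\beta_i-1}\Upsilon_{i1}\sigma_i^2\le0$ whenever $\dot s_i\neq0$.

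The third step is the one I expect to be the main obstacle: the bare drift term $\sigma_i\dot s_i$ is sign-indefinite and is not directly acted on by the switching control. I would bound it by $|\sigma_i||\dot s_i|$ and weigh it against the dissipation $\mu_i\beta_i|\dot s_i|^{\beta_i-1}(\Upsilon_{i2}-|\kappa_{i1}\varpi_i|)|\sigma_i|$, factoring out $|\sigma_i||\dot s_i|$ to isolate the state-dependent multiplier $\mu_i\beta_i|\dot s_i|^{\beta_i-2}(\Upsilon_{i2}-|\kappa_{i1}\varpi_i|)$ whose comparison with unity decides the sign of $\dot V_i$. The delicate point is that the factor $|\dot s_i|^{\beta_i-1}$ annihilates the control's influence on $\dot\sigma_i$ precisely where $\dot s_i=0$, and that the balance depends sensitively on whether $\beta_i$ lies above or below $2$; this degeneracy is exactly why the statement restricts to trajectories with $\dot s_i\neq0$ almost everywhere, so that the offending set is negligible and the dissipation remains active along the motion. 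Closing this step requires choosing $\mu_i,\Upsilon_{i1},\Upsilon_{i2}$ so that the switching dissipation dominates the drift on the reachable range of $|\dot s_i|$, yielding a strict reaching inequality $\dot V_i\le-\eta_i|\sigma_i|=-\eta_i\sqrt{2V_i}$ for some $\eta_i>0$ off the null set.

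Finally, from $\dot V_i\le-\eta_i\sqrt{2V_i}$ the comparison lemma gives $\sqrt{V_i(t)}\le\sqrt{V_i(0)}-\tfrac{\eta_i}{\sqrt2}\,t$, so $V_i$, and therefore $\sigma_i$, reaches zero no later than $t_{f_i}=\sqrt{2V_i(0)}/\eta_i$ and remains there by the invariance of the sliding set. Because $\operatorname{sign}(\sigma_i)$ and $\operatorname{sign}(\dot s_i)$ are discontinuous, all the above differentiations are to be read in the Filippov sense, as already flagged after \eqref{eq:dot_sigma}, and the identical argument applied to $i=x$ and $i=y$ completes the claim.
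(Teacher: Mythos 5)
Your overall architecture (a quadratic Lyapunov function, exposing the switching term inside $\ddot s_i$, concluding via $\dot V_i\le-\eta_i\sqrt{2V_i}$) starts out parallel to the paper, but the genuine gap is exactly the step you yourself flag as ``the main obstacle,'' and you do not close it. After factoring, the control authority in $\dot\sigma_i$ enters only through the state-dependent multiplier $\mu_i\beta_i|\dot s_i|^{\beta_i-1}$, which vanishes as $\dot s_i\to0$ (recall $\beta_i>1$). Consequently no constant $\eta_i>0$ can satisfy $\dot V_i\le-\eta_i|\sigma_i|$ uniformly along the trajectory: the hypothesis that $\dot s_i\neq0$ almost everywhere makes the degenerate set negligible but gives no uniform lower bound on $|\dot s_i|^{\beta_i-1}$. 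Your proposed fix --- ``choosing $\mu_i,\Upsilon_{i1},\Upsilon_{i2}$ so that the switching dissipation dominates the drift on the reachable range of $|\dot s_i|$'' --- is an additional gain condition that is not among the theorem's hypotheses, and it cannot hold on all of $(0,\infty)$ because the comparison factor $\mu_i\beta_i|\dot s_i|^{\beta_i-2}(\Upsilon_{i2}-|\kappa_{i1}\varpi_i|)$ tends to zero at one end of that range for every $\beta_i\neq2$. The bound on the drift term has a second problem: it needs $\Upsilon_{i2}-|\kappa_{i1}\varpi_i|$ to be strictly positive and bounded away from zero, whereas the stated assumption is only $|\kappa_{i1}\varpi_i|\le\Upsilon_{i2}$, so that bracket may vanish and the drift $\sigma_i\dot s_i$ is then dominated by nothing.

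The paper avoids this trap by not seeking a uniform real-time reaching rate at all: it absorbs the drift into the compact perturbation term, introduces the time-scale change $d\tau_i=\mu_i\beta_i|\dot s_i|^{\beta_i-1}\,dt$, under which the $\sigma_i$-dynamics become the standard first-order reaching law $d\sigma_i/d\tau_i=-\Upsilon_{i1}\sigma_i-\Upsilon_{i2}\operatorname{sign}(\sigma_i)+\kappa_{i1}\varpi_i$ with a constant reaching margin, concludes finite reaching in the $\tau_i$ clock, and only then maps back to real time via the monotonicity of $t\mapsto\tau_i(t)$. To salvage your route you would either have to adopt that rescaling or establish a uniform positive lower bound on $|\dot s_i|$ throughout the reaching phase; your proposal supplies neither. (For completeness: the paper's own final step, passing from a finite $\tau_{r_i}$ to a finite $t_{f_i}$, also requires the integral $\int\mu_i\beta_i|\dot s_i|^{\beta_i-1}\,dt$ to diverge, which strict monotonicity alone does not guarantee --- but that is a weakness of the paper's argument, not a repair of yours.)
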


\begin{proof}
Take the Lyapunov candidate $V_\Sigma=\tfrac12(\sigma_x^2+\sigma_y^2)$. Using \eqref{eq:dot_sigma} (and substituting the closed-loop expression for $\ddot s_i$ into the compact perturbation term $\kappa_{i1}\varpi_i$) we may write, for $\dot s_i\neq0$,
\[
\dot\sigma_i=\mu_i\beta_i|\dot s_i|^{\beta_i-1}\big(-\Upsilon_{i1}\sigma_i-\Upsilon_{i2}\operatorname{sign}(\sigma_i)+\kappa_{i1}\varpi_i\big).
\]
Hence
\[
\dot V_\Sigma=\sum_{i=x,y}\sigma_i\mu_i\beta_i|\dot s_i|^{\beta_i-1}\big(-\Upsilon_{i1}\sigma_i-\Upsilon_{i2}\operatorname{sign}(\sigma_i)+\kappa_{i1}\varpi_i\big).
\]
By the assumption $|\kappa_{i1}\varpi_i|\le\Upsilon_{i2}$ the discontinuous term is dominated and we obtain
\[
\dot V_\Sigma \le -\sum_{i=x,y}\mu_i\beta_i\,|\dot s_i|^{\beta_i-1}\Upsilon_{i1}\sigma_i^2 \le 0,
\]
so $V_\Sigma$ is nonincreasing and $\sigma_i$ are bounded.

To show finite-time reaching introduce the time-scale transformation
\[
d\tau_i=\mu_i\beta_i\,|\dot s_i|^{\beta_i-1}dt,
\]
which is strictly increasing in $t$ for trajectories with $\dot s_i\neq0$ a.e. Dividing the scalar dynamics by the positive multiplier yields, in the $\tau_i$ time,
\[
\frac{d\sigma_i}{d\tau_i}=-\Upsilon_{i1}\sigma_i-\Upsilon_{i2}\operatorname{sign}(\sigma_i)+\kappa_{i1}\varpi_i.
\]
With $|\kappa_{i1}\varpi_i|<\Upsilon_{i2}$ the right-hand side is a (uniformly) strictly negative feedback outside a neighborhood of zero and standard finite-time arguments for first-order linear dynamics plus a discontinuous offset imply existence of a finite $\tau_{r_i}>0$ such that $\sigma_i(\tau_i)=0$ for all $\tau_i\ge\tau_{r_i}$. Because the map $t\mapsto\tau_i(t)$ is continuous and strictly increasing, the corresponding real time $t_{f_i}$ (the preimage of $\tau_{r_i}$) is finite. Thus $\sigma_i$ is reached in finite real time and the trajectory thereafter remains on the manifold by Filippov-invariance arguments.
\end{proof}

\begin{theorem}\label{th:sliding_cascade}
Under the same assumptions as Theorem~\ref{th:reach_sigma}, once $\sigma_i(t)=0$ for $t\ge t_{f_i}$ the reduced-order (sliding) dynamics satisfy the following cascade of finite-time convergences:
\begin{enumerate}
  \item The auxiliary variable $z_i:=\dot s_i$ obeys
  \[
  \dot z_i = -\frac{1}{\mu_i\beta_i}\,|z_i|^{2-\beta_i}\operatorname{sign}(z_i),
  \]
  and reaches zero in finite time
  \begin{equation}\label{eq:Th2_Tz}
  T_{z,i} \;=\; \frac{\mu_i\beta_i}{\beta_i-1}\,|z_i(t_{f_i})|^{\,\beta_i-1}.
  \end{equation}
  \item Consequently $s_i$ vanishes in finite time (by the algebraic manifold relation) and, after $z_i\equiv0$, the tracking error $e_i$ evolves as
  \[
  \dot e_i = -\frac{\kappa_{i2}}{\kappa_{i1}}|e_i|^{\Phi_i}\operatorname{sign}(e_i),
  \]
  which reaches zero in finite time
  \begin{equation}\label{eq:Th2_Te}
  T_{e,i} \;=\; \frac{\kappa_{i1}}{(1-\Phi_i)\kappa_{i2}}\,|e_i(t_{f_i}+T_{z,i})|^{\,1-\Phi_i}.
  \end{equation}
\end{enumerate}
Thus $z_i\to0$, $s_i\to0$ and $e_i\to0$ in finite time along the sliding manifold.
\end{theorem}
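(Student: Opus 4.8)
The plan is to exploit the algebraic constraint that the manifold imposes once Theorem~\ref{th:reach_sigma} has secured $\sigma_i\equiv0$, and thereby collapse the closed loop onto two elementary scalar finite-time systems in cascade; note that these reduced dynamics are \emph{disturbance-free}, since on the manifold the evolution is dictated purely by the geometry of $\sigma_i=0$, which is what delivers robustness. I would fix $i\in\{x,y\}$ and work for $t\ge t_{f_i}$. Writing $z_i:=\dot s_i$, the identity $\sigma_i=0$ reads $s_i=-\mu_i|z_i|^{\beta_i}\operatorname{sign}(z_i)$. Because $\beta_i>1$, the map $g(\zeta)=|\zeta|^{\beta_i}\operatorname{sign}(\zeta)$ is continuously differentiable with $g'(\zeta)=\beta_i|\zeta|^{\beta_i-1}$, so I may differentiate the constraint in time: using $\dot s_i=z_i$ on the left and the chain rule on the right gives $z_i=-\mu_i\beta_i|z_i|^{\beta_i-1}\dot z_i$, and solving for $\dot z_i$ produces exactly the stated reduced dynamics $\dot z_i=-\tfrac{1}{\mu_i\beta_i}|z_i|^{2-\beta_i}\operatorname{sign}(z_i)$.

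Next I would integrate this separable equation. Taking $z_i>0$ without loss of generality (odd symmetry handles $z_i<0$) and integrating from $z_i(t_{f_i})$ down to $0$ yields $\int_0^{|z_i(t_{f_i})|}\zeta^{\beta_i-2}\,d\zeta=\tfrac{T_{z,i}}{\mu_i\beta_i}$; since $\beta_i-1>0$ the left integral equals $|z_i(t_{f_i})|^{\beta_i-1}/(\beta_i-1)$, which rearranges to the claimed time \eqref{eq:Th2_Tz}. A cleaner and fully rigorous route, which I would use to sidestep any concern that the exponent $2-\beta_i$ may be negative, is to invert the constraint directly as $z_i=-(|s_i|/\mu_i)^{1/\beta_i}\operatorname{sign}(s_i)$, so that $s_i$ obeys the standard finite-time law $\dot s_i=-\mu_i^{-1/\beta_i}|s_i|^{1/\beta_i}\operatorname{sign}(s_i)$ with exponent $1/\beta_i\in(0,1)$; a one-line substitution using $|s_i(t_{f_i})|=\mu_i|z_i(t_{f_i})|^{\beta_i}$ shows the resulting settling time coincides with \eqref{eq:Th2_Tz}.

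Once $z_i\equiv0$ for $t\ge t_{f_i}+T_{z,i}$, the manifold relation forces $s_i\equiv0$ as well, hence $\dot s_i=\kappa_{i1}\dot e_i+\kappa_{i2}|e_i|^{\Phi_i}\operatorname{sign}(e_i)=0$ rearranges to the reduced error dynamics $\dot e_i=-\tfrac{\kappa_{i2}}{\kappa_{i1}}|e_i|^{\Phi_i}\operatorname{sign}(e_i)$. Because $\Phi_i\in(0.5,1)\subset(0,1)$, this is again a textbook finite-time system; separating variables and integrating from $e_i(t_{f_i}+T_{z,i})$ to $0$ delivers \eqref{eq:Th2_Te}, completing the cascade $z_i\to0$, $s_i\to0$, $e_i\to0$.

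The hard part is not the integration but the well-posedness of the reduced flow near the origin and the invariance that legitimizes the cascade. For $\beta_i>2$ the field $|z_i|^{2-\beta_i}\operatorname{sign}(z_i)$ is singular at $z_i=0$, so I would argue convergence and subsequent invariance in the Filippov sense, or bypass the issue entirely through the equivalent $s_i$-dynamics above, whose exponent $1/\beta_i$ always lies in $(0,1)$. I would also verify that reaching $z_i=0$ is genuinely terminal—that $z_i$ remains at zero afterward—which holds because $s_i\equiv0$ forces $\dot s_i\equiv0$; this is precisely what guarantees that the two phases do not interfere and that the settling times add sequentially.
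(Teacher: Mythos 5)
Your argument follows essentially the same route as the paper's own proof: differentiate the manifold constraint $s_i=-\mu_i|z_i|^{\beta_i}\operatorname{sign}(z_i)$ to obtain the reduced $z_i$-dynamics, integrate the separable equation to get \eqref{eq:Th2_Tz}, then use $\dot s_i\equiv 0$ to reduce to the $e_i$-dynamics and integrate for \eqref{eq:Th2_Te}. Your extra care is in fact a small improvement over the paper: the paper asserts $2-\beta_i\in(0,1)$ from $\beta_i>1$ alone (true only for $\beta_i<2$), whereas your equivalent formulation $\dot s_i=-\mu_i^{-1/\beta_i}|s_i|^{1/\beta_i}\operatorname{sign}(s_i)$, whose exponent $1/\beta_i$ lies in $(0,1)$ for every $\beta_i>1$, handles the general case cleanly and yields the same settling time.
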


\begin{proof}
On the manifold $\sigma_i=0$ the algebraic relation 
\[
s_i + \mu_i|z_i|^{\beta_i}\operatorname{sign}(z_i)=0
\]
holds, where $z_i=\dot s_i$. For intervals where $z_i\neq0$ and keeps its sign, differentiate the relation and use the identity \(\frac{d}{dt}(|z|^{\beta-1}z)=\beta|z|^{\beta-1}\dot z\) to obtain
\[
z_i = -\mu_i\beta_i |z_i|^{\beta_i-1}\dot z_i.
\]
Rearranging yields the scalar dynamics
\[
\dot z_i = -\frac{1}{\mu_i\beta_i}\,|z_i|^{1-\beta_i} z_i
= -\frac{1}{\mu_i\beta_i}\,|z_i|^{2-\beta_i}\operatorname{sign}(z_i).
\]
Because $\beta_i>1$ we have exponent $2-\beta_i\in(0,1)$ and the above is a finite-time stable scalar ODE. Separation of variables and direct integration give the settling time \eqref{eq:Th2_Tz} for $z_i$ to reach zero starting from $z_i(t_{f_i})$.

From the manifold algebraic relation $s_i=-\mu_i|z_i|^{\beta_i-1}z_i$ it follows that $s_i(t)\to0$ as $z_i(t)\to0$, hence $s_i$ vanishes in finite time no later than $t_{f_i}+T_{z,i}$.

After $z_i\equiv0$ we have $\dot s_i=0$, and using $\dot s_i=\kappa_{i1}\dot e_i+\kappa_{i2}|e_i|^{\Phi_i}\operatorname{sign}(e_i)$ yields
\[
0=\kappa_{i1}\dot e_i+\kappa_{i2}|e_i|^{\Phi_i}\operatorname{sign}(e_i),
\]
which is the finite-time scalar dynamics for $e_i$ with exponent $\Phi_i\in(0,1)$. Integrating this equation gives the settling time \eqref{eq:Th2_Te} from the value $e_i(t_{f_i}+T_{z,i})$. This completes the cascade finite-time convergence proof.
\end{proof}

\begin{remark}
\begin{itemize}
  \item Theorems~\ref{th:reach_sigma} and \ref{th:sliding_cascade} together provide a full picture: first the controller forces $\sigma_i$ to zero in finite time (reaching phase), then while sliding the internal variables $\dot s_i$, $s_i$ and $e_i$ converge to zero in finite time (sliding phase).
  \item Filippov solutions are implicitly used to handle sign changes and the switching term at the manifold; a fully rigorous Filippov treatment can be added if desired.
  \item The expressions \eqref{eq:Th2_Tz} and \eqref{eq:Th2_Te} give explicit dependence of settling times on design parameters and initial conditions on the manifold.
\end{itemize}
\end{remark}

\section{Experimental Validation Results}

To exhibit the practical viability of the proposed controller, a real indoor experiment was conducted on the WMR. To illustrate the improved performance of the proposed RTC law, the flatness-based sliding mode control (FBSMC) technique was used for comparison. The parameters of the proposed RTC law are chosen as outlined: $\kappa_{x1}=3,\kappa_{x2}=0.1, \kappa_{y1}=3, \kappa_{y2}=0.1, \Phi_x=0.95, \Phi_y=0.95, \mu_x=1.14, \mu_y=1.14, \beta_x=1.28, \beta_y=1.28, \Upsilon_{x1}=0.04, \Upsilon_{x2}=0.02, \Upsilon_{y1}=0.04$ and $\Upsilon_{y2}=0.02$.  Fig.~\ref{fig:setup} displays the experimental setup and configuration. A router established a shared WiFi network between the WMR and the laptop, with the laptop acting as the master. The WMR's motion was controlled via the Robot Operating System (ROS) on the laptop. Soft foam boards were placed on the floor to introduce disturbances that could impact TT performance. Additionally, a blower was used to simulate wind gusts, introducing environmental factors that affect the WMR's performance and emulate real-world conditions. Videos of the hands-on experiment are available in~\footnote{\url{https://youtu.be/jGpGLuDYfdw?si=lB8erlgZuPFvyAW5}}. 

To test robustness against strong external disturbances and to analyze each controller's potential to handle the TT problem, a wind from the blower was applied around t=70 sec as a strong disturbance while following the referenced trajectory. Fig.~\ref{fig:result1} depicts the TT performance of both controllers. To provide a meaningful comparison, the tracking errors along $x$ and $y$ are shown in Fig.~\ref{fig:result5} and Fig.~\ref{fig:result6}. The WMR exhibits a noticeable error when exposed to severe wind gusts from the blower. However, it subsequently converges and follows the desired path. Importantly, the actual WMR trajectory deviated slightly from the desired route. This discrepancy may result from a number of factors, including track conditions, wheel dynamics, communication delays, and the size and shape of the WMR. However, the tracking accuracy remains within an acceptable range for practical applications. Furthermore, selecting a large positive value of $\mu_i$ and a lower value of $\beta_i$ can accelerate the convergence of $e_x$ and $e_y$, however, it also increases the overall magnitude of the control input. Therefore, the optimal gain selection is displayed here. In addition, Figures~\ref{fig:result7} and \ref{fig:result8} depict the translational and rotational velocities. The WMR can achieve maximum rotational and translational velocities of 2.84 rad/s and 0.22 m/s, respectively. Notably, both controllers were tested with a saturation limit imposed on both velocities to ensure compliance with the maximum allowable limits. The actual velocities recorded for the proposed controller during the experiments remained within these constraints. Under strong disturbances, the SMC demonstrated robustness while attempting to exceed these limits. To give a more comprehensive perspective, a quantitative study was done, focusing the superior TT capability of the proposed strategy. The Integral Absolute Error $(IAE)$, Integral Square Error $(ISE)$ along $x$ and $y$, and the average control utilization $(P_{avg})$ were used as evaluation metrices in the experiment, as illustrated in Table~\ref{table:1}.  The proposed method exhibits low control energy usage while preserving effective TT capability, which is crucial in environments with limited energy availability.

\begin{figure}[]
	\centering 
\includegraphics[width=0.7\linewidth,keepaspectratio]{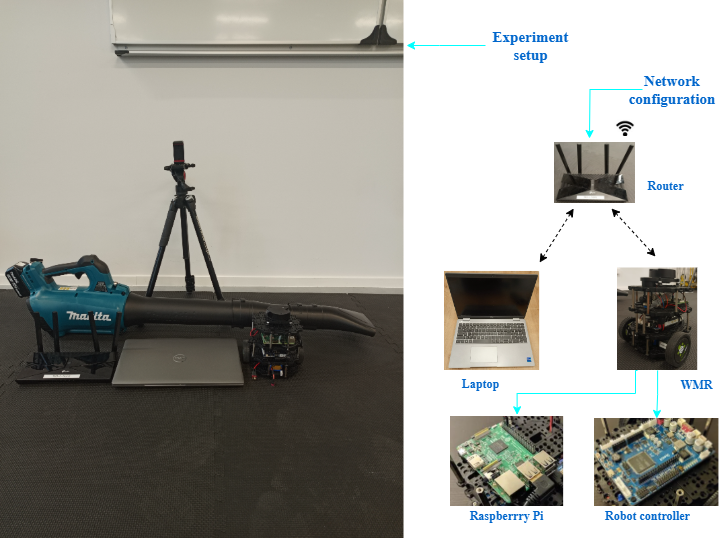}
	\caption{Experimental setup.}\label{fig:setup}
\end{figure}
\begin{figure}[]
    \centering
    \includegraphics[scale=0.53]{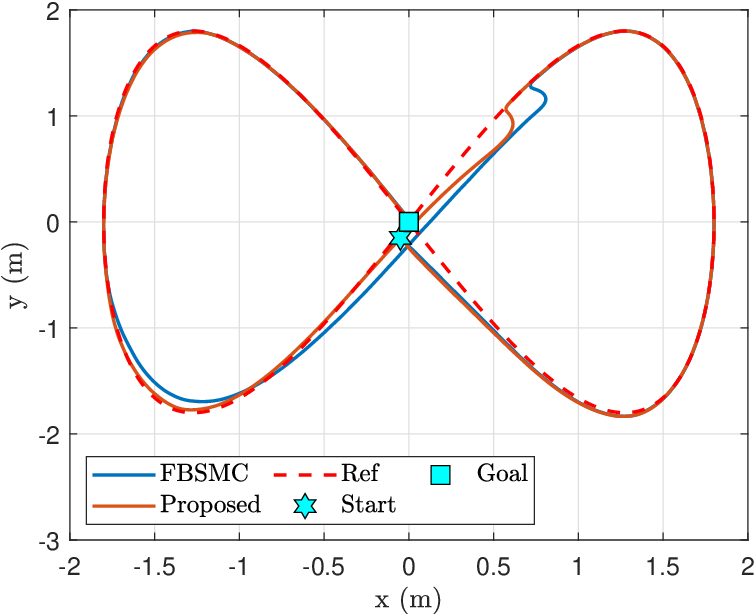}
    \caption{Control and referenced trajectories.}
    \label{fig:result1}
\end{figure}

\begin{figure}[]
    \centering
    \includegraphics[scale=0.58]{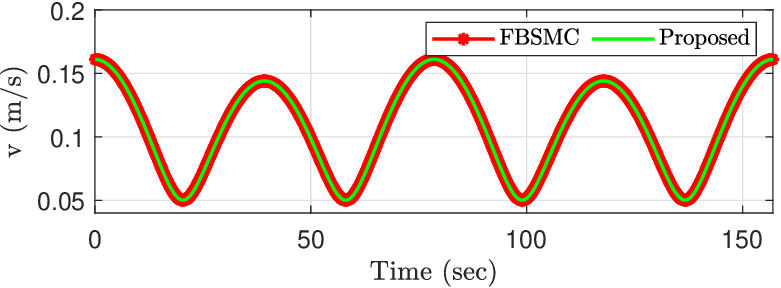}
    \caption{Tracking error of x.}
    \label{fig:result5}
\end{figure}
\begin{figure}[]
    \centering
    \includegraphics[scale=0.58]{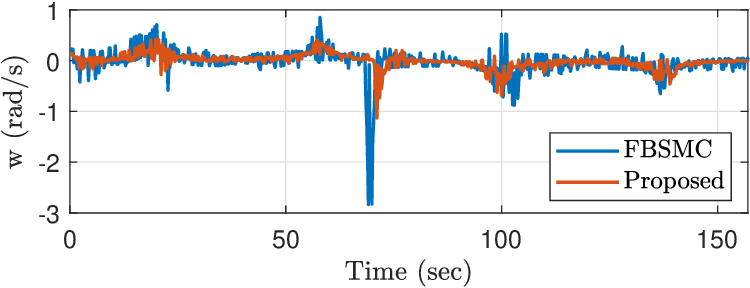}
    \caption{Tracking error of y.}
    \label{fig:result6}
\end{figure}
\begin{figure}[]
    \centering
    \includegraphics[scale=0.58]{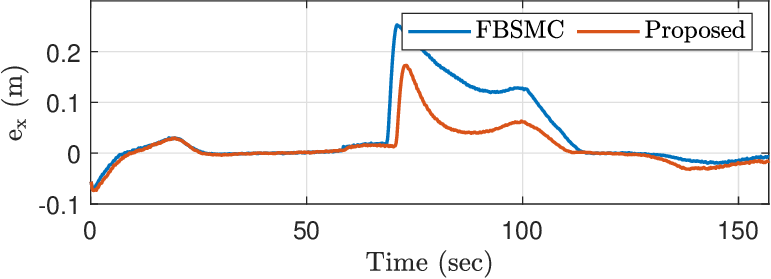}
    \caption{Linear velocity.}
    \label{fig:result7}
\end{figure}
\begin{figure}[]
    \centering
    \includegraphics[scale=0.58]{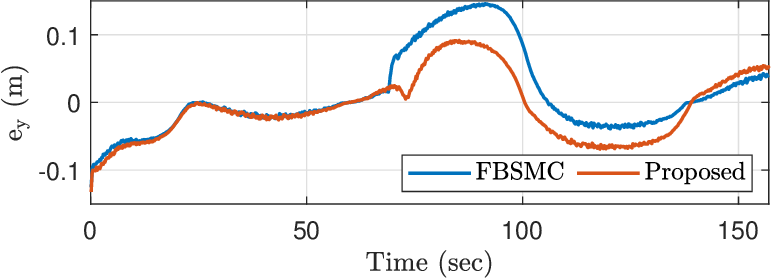}
    \caption{Angular velocity.}
    \label{fig:result8}
\end{figure}

\begin{table}[]
\centering
    \caption{Quantitative evaluation}
    \begin{tabular}{ccccccc}
        \hline
         \textbf{Controller} & \textbf{IAE$_{x}$} & \textbf{IAE$_{y}$} & \textbf{ISE$_{x}$} & \textbf{ISE$_{y}$} & \textbf{P$_{\text{avg}}$} \\
        \hline
        FBSMC &  7.0131 &  6.9040 &  0.9719 &  0.5829 &  0.0967   \\
        Proposed &  3.8904  & 6.4075 &  0.2465 &  0.3763 &  0.0293  \\ \hline
    \end{tabular}%
\label{table:1}
\end{table}

\section{Conclusions}

This article is dedicated to designing a novel robust control algorithm for precise tracking of WMRs subjected to strong external perturbations. First, the WMR model is transformed into a linear canonical form by exploiting the DF of its KM. Then, in the control design process, two nonlinear sliding surface variables are embedded to construct new sliding manifolds for the WMR. Furthermore, discontinuous control laws are formulated to Counteract the impact of strong disturbances, ensuring robustness, with stability formally proven. Additionally, comparative physical experiments on the TurtleBot3 WMR under strong disturbances, confirm the effectiveness of the proposed algorithm both quantitatively 
 and qualitatively thereby, showcasing its practical applicability. Future research includes extending
the proposed strategy to mobile manipulators
and quadrotors to evaluate its effectiveness and explore
its potential for wider applications.

\bibliography{bib}             
 
\end{document}